\numberwithin{equation}{section}
\theoremstyle{plain}
\newtheorem{thm}{Theorem}[section]
\newtheorem{lemma}{Lemma}[section]
\theoremstyle{remark}
\newtheorem{remark}{Remark}[section]
\newcommand{\bmX}{\bm{X}}
\newcommand{\bmv}{\bm{v}}
\newcommand{\bmx}{\bm{x}}
\newcommand{\bmy}{\bm{y}}
\newcommand{\bone}{\bm{1}}
\newcommand{\bzero}{\bm{0}}
\newcommand{\bme}{\bm{e}}
\newcommand{\bmep}{\bm{\epsilon}}
\newcommand{\bmbe}{\bm{\beta}}
\begin{document}

\begin{frontmatter}
\title{Noise Addition for Individual Records to Preserve Privacy and Statistical Characteristics: Case Study of Real Estate Transaction Data}
\runtitle{Method for Noise Addition}

\begin{aug}
\author{\fnms{Yuzo} \snm{Maruyama}
\ead[label=e1]{maruyama@csis.u-tokyo.ac.jp}}
\author{\fnms{Ryoko} \snm{Tone}
\ead[label=e2]{ryo-t@ua.t.u-tokyo.ac.jp}}
\and
\author{\fnms{Yasushi} \snm{Asami}
\ead[label=e3]{asami@csis.u-tokyo.ac.jp}}

\runauthor{Y. Maruyama et al.}

\affiliation{The University of Tokyo}
\address{The University of Tokyo \\
\printead{e1,e2,e3}}
\end{aug}

\begin{abstract}
We propose a new method of perturbing a major variable by adding noise
such that results of regression analysis are unaffected. The extent of the perturbation
can be controlled using a single parameter, which eases an actual perturbation
application. On the basis of results of a numerical experiment, we recommend an
appropriate value of the parameter that can achieve both sufficient perturbation to
mask original values and sufficient coherence between perturbed and original data.
\end{abstract}
\end{frontmatter}

\section{Introduction}
\label{sec:intor}
Increasing amounts of information are now circulated because of recent advancements
in digitalization, thereby increasing the importance of protecting personal information.
Information that can identify a person should not be publicized or utilized without the
person's consent. In the case of information regarding real estate, the location can be
identified by combining several information sources, which in turn might be used to
identify a person, such as an owner or resident. Spatial information of the sort that
relates to real estate can be considered to require special protection.

Two factors are important in dealing with privacy-sensitive information.
First, if information is leaked, then the organization responsible for the information risks
receiving compensation claims because of privacy protection failure.
Second, to avoid possible troubles due to potential information leaks, publicized data tend to become very rough or vague to avoid
potential trouble, often hindering the usefulness of real estate analyses aimed at
understanding the market.

A promising way of dealing with this situation is to protect personal information by
adding noise to acquired data. A typical example of sensitive information is transaction
data, which can include transacted prices, real estate or transacting person
characteristics, and information regarding transaction conditions. Publicized data tend
to omit information about characteristics of transacting persons, and hence, such
contents are assumed not to be included in the database. In this case, one of the most
sensitive types of data will be the transacted price. Private information will be
protected if noise is added to the price data. However, tactless noise addition seriously
distorts data analysis results. Therefore, providing a method of adding noise without
distorting data analyses but still protecting privacy is very important. This study is
devoted to proposing and applying such a method, assuming that the main concern of
the analyses is hedonic analysis, i.e., regression analysis with the transacted price
being the response variable.

\cite{takemura2003current} reviewed statistical issues in publicizing individual data. He listed
several methods of protecting personal information, such as (1) direct hiding by making
the information secret, (2) global categorization by organizing values into several
coarse classes, and (3) disturbance by replacing actual values with different ones (such
as swapping by exchanging individual values, the post-randomization method [PRAM],
or the addition of noise). Direct hiding and global categorization are not appropriate for
releasing data for detailed analyses because the resolution of the information can
become very coarse. Disturbance methods are superior in this aspect, although they
usually introduce errors into analyses, and such effects must be carefully examined.

One well-known method of protecting personal information is the statistical disclosure
limitation (SDL) method. SDL is a general term for methods of protecting identification
of personal data by adding perturbations, modifications, or summarization (\cite{shlomo2010releasing}).
The main concern is to reduce identification risk as well as to retain data usability.

Typically, three kinds of methods are often used to reduce identification risk, including
(1) methods of establishing coarse categorization, (2) methods of generating new data
with statistical characteristics similar to those of the original data, and (3) methods of
adding noise to the original data (\cite{karr2006framework}; \cite{oganian2011masking}).

Substantial research has been conducted on methods of establishing coarse
categorization. In particular, population uniqueness, the feature that a combination of
attributes becomes unique in the parent population, has been studied extensively. For
example, \cite{manrique2012estimating} estimated the risk of population
uniqueness for discrete data.

Regarding methods of generating new data, the swapping method, in which categorical
data are probabilistically exchanged, is well known. One such swapping method is
PRAM, which perturbs the exchanging of categorical data (\cite{Gouweleeuw1998}; \cite{willenborg2001elements}). In this method, a transition probability matrix is constructed and then used as the basis for exchanging categorical data, while maintaining the original proportions of the categories.

A variety of methods of adding noise, while carefully maintaining qualitative features,
have been proposed. For example, \cite{oganian2011masking} focused on features such as
the positivities of values and the magnitude relations between pairs of values. They
proposed a method of adding noise such that the positivities of values, mean values,
and variance-covariance matrices remain the same. One remarkable idea is to use
multiplicative noise addition to avoid obtaining negative values. Moreover, they
demonstrated the stability of results after regression analyses. A similar method of
maintaining the characteristics of attributes was proposed by \cite{abowd2001disclosure}.
Another method of adding noise to avoid the risk of identification is to introduce
random noise distributed following a peculiar symmetric distribution with a hole in the
center. With this method, the perturbed value is never close to the original value, and
therefore, the risk of identification is drastically reduced. In the actual application of
this method, the noise distribution is not publicized, hindering analyses using the
distribution (\cite{reiter2012statistical}).

In general, noise addition can influence the quality of subsequent analyses.
\cite{fuller1993masking} noted that noise addition has an influence similar to that of introducing
measurement errors to explanatory variables. Several methods have been devised to
minimize the influence of noise in particular analyses. For example, some methods
maintain the original mean values and variance-covariance matrices (\cite{ting2008random}; \cite{shlomo2008protection}). In our paper, which focuses on regression analysis, a method is proposed in which adding noise produces robust results.

The paper is organized as follows. In Section \ref{sec:theory}, we propose a method of adding noise to
a response variable and show that some important statistics do not change with noise
addition. In Section \ref{sec:numerical}, numerical experiments are conducted to examine how the
results of multivariate analyses, apart from the assumed regression analysis, can
change. Finally, Section \ref{sec:conclusion} concludes with a summary and suggests possible extensions
of our method.

\section{Theoretical results}
\label{sec:theory}
We assume that the $n\times (p+1)$ design matrix $\bmX$ is given by
$(\bm{1}_n,\bmx_1,\dots,\bmx_p)$, where $\bm{1}_n$ is an $n$-dimensional vector of ones, and
the $n$-dimensional response vector is $\bmy$. We also assume that $n$ is sufficiently larger than
$p$ and that the rank of $\bmX$ is $p+1$.
Then the ordinary least squares (OLS) estimator is 
\begin{equation*}
 \hat{\bm{\beta}}=(\hat{\beta}_0,  \hat{\beta}_1,  \dots , \hat{\beta}_p)'
=\left(\bm{X}'\bm{X}\right)^{-1}\bm{X}'\bm{y}.
\end{equation*}
A decomposition of $\bmy$ based on the OLS estimator $\hat{\bm{\beta}}$ is
$\bmy=\hat{\bmy}+\bm{e}$ where 
\begin{align*}
 \hat{\bmy}=\bmX\hat{\bm{\beta}}=\bmX\left(\bm{X}'\bm{X}\right)^{-1}\bm{X}'\bmy
\end{align*}
is the predictive vector 
and 
\begin{align*}
 \bm{e}=\bmy-\hat{\bmy}=(\bm{I}_n-\bmX\left(\bm{X}'\bm{X}\right)^{-1}\bm{X}')\bmy
\end{align*}
is the residual vector.
Then the coefficient of determination defined by
\begin{equation}\label{Rsq}
 R^2=1-\frac{\|\bm{e}\|^2}{\|\bm{y}-\bar{y}\bm{1}_n\|^2}
\end{equation}
where $\bar{y}$ is the sample mean of $\bmy$,
measures the goodness of fit resulting from the use of the OLS estimator $\hat{\bm{\beta}}$.
The coefficient of determination, $R^2$, is hence regarded as a key quantity in regression analysis.
The $t$-value of the regression coefficient $\beta_j$ for $j=0,1,\dots,p$,
is another key quantity and is defined by
\begin{equation}\label{t-value}
t_j=\frac{\sqrt{n-p-1}}{d_{j}} \frac{\hat{\beta}_j}{\|\bm{e}\|}
\end{equation}
where $d_{j}$ is the $(j+1)$-th diagonal component of $\left(\bm{X}'\bm{X}\right)^{-1}$.
When Gaussian linear regression is performed, $t_j$ has a Student's $t$-distribution
with $n-p-1$ degrees of freedom under the null hypothesis $\beta_j=0$.

The objective of the derivation presented herein is to add perturbation to the original
response vector and achieve tractable tuning of the $R^2$ and $t$-values. Any $n$-dimensional
random vector
\begin{align*}
 \bm{v}=(v_1,\dots,v_n)'.
\end{align*}
may be used as the starting point. Since $n$ is sufficiently
greater than $p$, $\bm{v}$ cannot be expressed as a linear combination of
$\bm{e}, \bm{1}_n,\bm{x}_1,\dots,\bm{x}_p$ with probability one.
In other words,
\begin{equation}\label{def:u}
\bm{u}= \left(\bm{I}_n-\bmX(\bmX'\bmX)^{-1}\bmX'-\bme\bme'/\|\bme\|^2\right)\bm{v},
\end{equation}
cannot be the zero vector.
The noise vector considered in this paper is a linear combination of $\bme$ and $\bm{u}$,
given by
\begin{equation}\label{y_ast}
 \bm{\epsilon} =\frac{a\|\bm{e}\|}
{1+b}\left\{\frac{\bm{e}}{\|\bm{e}\|}+ \sqrt{b}\frac{\bm{u}}{\|\bm{u}\|}\right\},
\end{equation}
where $a\neq 0$ and $b\geq 0$.
When $\bmy+\bmep$ is used instead of the original response vector $\bmy$,
we have the following result.

\begin{thm}\label{thm:main}
\begin{enumerate}
\item \label{thm:main:0}
The sample mean of $\bmy+\bmep$ is $\bar{y}$ for any $a$ and $b$.
\item \label{thm:main:1}
The OLS estimator for the response vector $\bmy+\bmep$
remains the same for any $a$ and $b$, that is, 
\begin{equation*}
 (\bmX'\bmX)^{-1}\bm{X}'(\bmy+\bmep)=(\bmX'\bmX)^{-1} \bm{X}'\bm{y}.
\end{equation*}
\item \label{thm:main:2}
The $t$-values for the response vector $\bmy+\bmep$ are given by
\begin{equation*}
 \tilde{t}_j=\left\{\frac{1+b}{1+b+a(a+2)}\right\}^{1/2}t_j,
\end{equation*}
for $j=0,\dots,p$.
\item \label{thm:main:3}
The coefficient of determination for the response vector $\bmy+\bmep$ is
\begin{equation*}
\tilde{R}^2=\left\{\frac{1+b}{1+b+a(a+2)(1-R^2)}\right\}R^2.
\end{equation*}
\item \label{thm:main:4}
The correlation coefficient of $\bmy$ and $\bmy+\bmep$ is
\begin{equation*}
r_{y,y+\epsilon}= \frac{1+b+a(1-R^2)}{(1+b)^{1/2}\{1+b+a(a+2)(1-R^2)\}^{1/2}}.
\end{equation*}
\end{enumerate}
\end{thm}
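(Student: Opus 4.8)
The plan is to reduce all five statements to a short list of inner products built from the mutually orthogonal pieces $\bme$, $\bmu$, and the column space of $\bmX$. First I would record the structural facts that drive everything: $\bmX'\bme=\bzero$ because $\bme=(\bm{I}_n-\bmX(\bmX'\bmX)^{-1}\bmX')\bmy$; then $\bmX'\bmu=\bzero$ and $\bme'\bmu=0$ directly from the projector appearing in the definition~\eqref{def:u} of $\bmu$; and, since $\bm{1}_n$ is the first column of $\bmX$, also $\bm{1}_n'\bme=0$ and $\bm{1}_n'\bmu=0$. Hence $\bme/\|\bme\|$ and $\bmu/\|\bmu\|$ form an orthonormal pair, and from the definition~\eqref{y_ast} of $\bmep$ one obtains
\begin{equation*}
\bmX'\bmep=\bzero,\qquad \bm{1}_n'\bmep=0,\qquad \bme'\bmep=\frac{a\|\bme\|^2}{1+b},\qquad \|\bmep\|^2=\frac{a^2\|\bme\|^2}{1+b}.
\end{equation*}
Claims~\ref{thm:main:0} and~\ref{thm:main:1} are then immediate: $\bm{1}_n'\bmep=0$ gives the sample mean, and $\bmX'\bmep=\bzero$ gives $(\bmX'\bmX)^{-1}\bmX'(\bmy+\bmep)=(\bmX'\bmX)^{-1}\bmX'\bmy$.

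For Claim~\ref{thm:main:2} I would note that, because $\bmep$ is orthogonal to the column space of $\bmX$, the residual-forming projection leaves it untouched, so the new residual vector is $\tilde{\bme}=(\bm{I}_n-\bmX(\bmX'\bmX)^{-1}\bmX')(\bmy+\bmep)=\bme+\bmep$. Expanding $\|\tilde{\bme}\|^2=\|\bme\|^2+2\bme'\bmep+\|\bmep\|^2$ with the inner products above gives $\|\tilde{\bme}\|^2=\|\bme\|^2\{1+b+a(a+2)\}/(1+b)$; inserting this and the unchanged $\hat{\beta}_j$ from Claim~\ref{thm:main:1} into the definition~\eqref{t-value} of the $t$-value produces the claimed factor $\{(1+b)/(1+b+a(a+2))\}^{1/2}$.

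For Claims~\ref{thm:main:3} and~\ref{thm:main:4} the only additional ingredient is the centered cross term $(\bmy-\bar{y}\bm{1}_n)'\bmep$. Writing $\bmy-\bar{y}\bm{1}_n=(\hat{\bmy}-\bar{y}\bm{1}_n)+\bme$, with the first summand in the column space of $\bmX$ and hence orthogonal to $\bmep$, this cross term equals $\bme'\bmep=a\|\bme\|^2/(1+b)$. Therefore $\|(\bmy+\bmep)-\bar{y}\bm{1}_n\|^2=\|\bmy-\bar{y}\bm{1}_n\|^2+a(a+2)\|\bme\|^2/(1+b)$, and substituting $\|\bme\|^2=(1-R^2)\|\bmy-\bar{y}\bm{1}_n\|^2$ into $\tilde{R}^2=1-\|\tilde{\bme}\|^2/\|(\bmy+\bmep)-\bar{y}\bm{1}_n\|^2$, and likewise into the ratio defining $r_{y,y+\epsilon}$, reduces both to the stated closed forms after cancellation. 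I do not expect a genuine obstacle; the only delicate point is the algebra in Claim~\ref{thm:main:3}, where the factor $1-R^2$ must be tracked through numerator and denominator and the identity $(1+b)+a(a+2)(1-R^2)-(1-R^2)\{1+b+a(a+2)\}=(1+b)R^2$ is exactly what collapses $\tilde{R}^2$ to the advertised form.
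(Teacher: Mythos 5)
Your proof is correct and follows essentially the same route as the paper: the orthogonality facts and inner products you list ($\bmX'\bmep=\bzero$, $\bme'\bmep=a\|\bme\|^2/(1+b)$, $\|\bmep\|^2=a^2\|\bme\|^2/(1+b)$, and the resulting expansions of the residual sum of squares and the centered sum of squares) are exactly the content of the paper's Lemma \ref{lem:preparation}, which it then plugs into the definitions of $t_j$, $R^2$, and the correlation just as you do. The only difference is organizational: the paper isolates these identities in a separate lemma, while you carry them inline.
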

\begin{proof}
By Part \ref{lem:preparation:ep} of Lemma \ref{lem:preparation}, we have
$ \bmX'\bmep=\bzero$, the first component of which is $\bone'_n\bmep=0$. Hence Part \ref{thm:main:0}
follows.

Since $ \bmX'\bmep=\bzero$, we have 
\begin{equation}\label{same}
(\bmX'\bmX)^{-1}\bm{X}'(\bmy+\bmep)=(\bmX'\bmX)^{-1}\bm{X}'\bmy+(\bmX'\bmX)^{-1}\bm{X}'\bmep
=(\bmX'\bmX)^{-1} \bm{X}'\bm{y}
\end{equation}
which completes the proof of Part \ref{thm:main:1}.

Note that the $t$-values are defined by \eqref{t-value}. By \eqref{same}, 
any component of the OLS estimator keeps the same. 
Further $\sqrt{n-p-1}/d_{j}$ does not depend on the response vector.
Hence 
Part \ref{thm:main:2} follows from Part \ref{lem:preparation:red} of Lemma \ref{lem:preparation}.

Note the coefficient of determination is defined by \eqref{Rsq}. 
Since the sample mean of $ \bmy+\bmep$ is also $\bar{y}$ as in Part \ref{thm:main:0} of this theorem, 
the coefficient of determination for the response vector $ \bmy+\bmep$ is
\begin{equation*}
 1-\frac{\|(\bm{I}_n-\bmX(\bmX'\bmX)^{-1}\bmX')(\bmy+\bmep)\|^2}{
\|\bmy+\bmep-\bar{y}\bm{1}_n\|^2},
\end{equation*}
which is rewritten as
\begin{equation*}
 1-\frac{\left\{1+a(a+2)/(1+b)\right\}\|\bm{e}\|^2}{\|\bm{y}-\bar{y}\bm{1}_n\|^2+\left\{a(a+2)/(1+b)\right\}\|\bm{e}\|^2}
\end{equation*}
by Parts \ref{lem:preparation:dev} and \ref{lem:preparation:red} of Lemma \ref{lem:preparation}. 
By the definition of $R^2$,
we have
\begin{equation}\label{Rsq_1}
 1-R^2=\|\bm{e}\|^2/\|\bmy-\bar{y}\bm{1}_n\|^2,
\end{equation}
which completes the proof of Part \ref{thm:main:3}.

The correlation coefficient of $\bmy$ and $\bmy+\bmep$ is
\begin{equation*}
\frac{ (\bmy-\bar{y}\bm{1}_n)'(\bmy+\bmep-\bar{y}\bm{1}_n)}{\|\bmy-\bar{y}\bm{1}_n\|\|\bmy+\bmep-\bar{y}\bm{1}_n\|}.
\end{equation*}
By Parts 
\ref{lem:preparation:ep} and \ref{lem:preparation:ast}
of Lemma \ref{lem:preparation} as well as \eqref{Rsq_1}, we have
\begin{align*}
 (\bmy-\bar{y}\bm{1}_n)'(\bmy+\bmep-\bar{y}\bm{1}_n) 
&=\|\bmy-\bar{y}\bm{1}_n\|^2+(\bmy-\bar{y}\bm{1}_n)'\bmep \\
&=\|\bmy-\bar{y}\bm{1}_n\|^2+(\hat{\bmy}+\bm{e}-\bar{y}\bm{1}_n)'\bmep \\
&=\|\bmy-\bar{y}\bm{1}_n\|^2+(\bmX\hat{\bmbe}-\bar{y}\bm{1}_n)'\bmep+\bm{e}'\bmep \\
&=\|\bmy-\bar{y}\bm{1}_n\|^2+\bm{e}'\bmep \\
&=\|\bmy-\bar{y}\bm{1}_n\|^2\left[1+\{a/(1+b)\}(1-R^2)\right].
\end{align*}
Further, by Part \ref{lem:preparation:dev} of Lemma \ref{lem:preparation}, we have 
\begin{align*}
 \| \bmy-\bar{y}\bm{1}_n+\bmep\|^2 =\|\bmy-\bar{y}\bm{1}_n\|^2\left[1+\{a(a+2)/(1+b)\}(1-R^2)\right],
\end{align*}
which completes the proof of Part \ref{thm:main:4}.
\end{proof}
The lemma below summarizes fundamental properties related to $\bme$ and $\bmep$,
which are needed in the proof of Theorem \ref{thm:main}.
\begin{lemma}\label{lem:preparation}
 \begin{enumerate}
\item \label{lem:preparation:e}
$\bm{e}$ is orthogonal to $ \bm{1}_n,\bmx_1,\dots,\bmx_p$ or equivalently $\bmX'\bm{e}=\bm{0}$.
\item \label{lem:preparation:u}
$\bm{u}$ is orthogonal to $ \bm{e},\bm{1}_n,\bmx_1,\dots,\bmx_p$
or equivalently $\bmX'\bm{u}=\bm{0}$ and $\bm{e}'\bm{u}=0$.
\item \label{lem:preparation:ep}
$\bmX'\bmep=\bzero$.
\item \label{lem:preparation:ast}
$ \bm{e}'\bmep=a\|\bm{e}\|^2/(1+b)$ and $ \|\bmep\|^2=a^2\|\bm{e}\|^2/(1+b)$.
\item \label{lem:preparation:dev}
The sum of squared deviation of $ \bmy+\bmep$ is 
\begin{align*}
\|\bmy+\bmep-\bar{y}\bm{1}_n\|^2= \|\bm{y}-\bar{y}\bm{1}_n\|^2+\left\{a(a+2)/(1+b)\right\}\|\bm{e}\|^2.
\end{align*}
\item \label{lem:preparation:red}
The residual sum of squares for $ \bmy+\bmep$ is 
\begin{align*}
\|(\bm{I}_n-\bmX(\bmX'\bmX)^{-1}\bmX')(\bmy+\bmep)\|^2=  \left\{1+a(a+2)/(1+b)\right\}\|\bm{e}\|^2.
\end{align*}
\end{enumerate}
\end{lemma}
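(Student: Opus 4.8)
The plan is to prove the six parts in the stated order, since each relies on its predecessors, and all of them reduce to elementary manipulations of the hat matrix $\bmX(\bmX'\bmX)^{-1}\bmX'$ together with the definitions \eqref{def:u} and \eqref{y_ast}.

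For Part \ref{lem:preparation:e} I would left-multiply $\bme=(\bm{I}_n-\bmX(\bmX'\bmX)^{-1}\bmX')\bmy$ by $\bmX'$ and use $\bmX'\bmX(\bmX'\bmX)^{-1}=\bm{I}_{p+1}$ to obtain $\bmX'\bme=\bmX'\bmy-\bmX'\bmy=\bzero$; orthogonality to each column of $\bmX$, $\bm{1}_n$ included, is just this identity read componentwise. For Part \ref{lem:preparation:u}, multiplying the definition of $\bmu$ by $\bmX'$ annihilates the first two terms exactly as before and the third because $\bmX'\bme=\bzero$ by Part \ref{lem:preparation:e}; multiplying instead by $\bme'$, the projection term drops out since $\bme'\bmX=\bzero$, and the remaining two terms cancel because $\bme'\bme=\|\bme\|^2$. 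Part \ref{lem:preparation:ep} is then immediate, as $\bmep$ in \eqref{y_ast} is a linear combination of $\bme$ and $\bmu$, both killed by $\bmX'$.

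For Part \ref{lem:preparation:ast} I would expand $\bmep$ via \eqref{y_ast}: dotting with $\bme$, the $\bmu$-term vanishes by Part \ref{lem:preparation:u} and $\bme'\bme/\|\bme\|=\|\bme\|$ leaves $\bme'\bmep=a\|\bme\|^2/(1+b)$; for $\|\bmep\|^2$, the squared norm of the bracket $\bme/\|\bme\|+\sqrt{b}\,\bmu/\|\bmu\|$ is $1+b$ (unit diagonal terms $1$ and $b$, vanishing cross term again by Part \ref{lem:preparation:u}), so the prefactor $a^2\|\bme\|^2/(1+b)^2$ yields $a^2\|\bme\|^2/(1+b)$.

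Parts \ref{lem:preparation:dev} and \ref{lem:preparation:red} are where a little care is needed. For Part \ref{lem:preparation:dev} I would write $\bmy-\bar{y}\bm{1}_n=(\bmX\hat{\bmbe}-\bar{y}\bm{1}_n)+\bme$ and note that $\bmX\hat{\bmbe}-\bar{y}\bm{1}_n$ lies in the column space of $\bmX$ (because $\bm{1}_n$ is itself a column of $\bmX$), hence is orthogonal to $\bmep$ by Part \ref{lem:preparation:ep}; therefore $(\bmy-\bar{y}\bm{1}_n)'\bmep=\bme'\bmep$, and expanding $\|\bmy-\bar{y}\bm{1}_n+\bmep\|^2=\|\bmy-\bar{y}\bm{1}_n\|^2+2\bme'\bmep+\|\bmep\|^2$ and substituting Part \ref{lem:preparation:ast} gives the claim, the last two terms combining into $\{a(a+2)/(1+b)\}\|\bme\|^2$. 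For Part \ref{lem:preparation:red}, the residual projection $\bm{I}_n-\bmX(\bmX'\bmX)^{-1}\bmX'$ fixes $\bme$ and, by Part \ref{lem:preparation:ep}, fixes $\bmep$ as well, so the residual vector of $\bmy+\bmep$ is simply $\bme+\bmep$; expanding $\|\bme+\bmep\|^2$ and invoking Part \ref{lem:preparation:ast} once more gives $\{1+a(a+2)/(1+b)\}\|\bme\|^2$. The only genuine obstacle is recognizing that $\bmep$ is orthogonal not merely to the columns of $\bmX$ but to the whole centered fitted vector $\bmX\hat{\bmbe}-\bar{y}\bm{1}_n$, which is precisely what makes the deviation sum of squares split cleanly; the rest is bookkeeping.
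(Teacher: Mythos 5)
Your proposal is correct and follows essentially the same route as the paper's own proof: the same hat-matrix identity for Parts \ref{lem:preparation:e}--\ref{lem:preparation:ep}, the same orthogonal expansion of $\bmep$ for Part \ref{lem:preparation:ast}, and the same decompositions $(\bmy-\bar{y}\bm{1}_n)'\bmep=\bme'\bmep$ and $(\bm{I}_n-\bmX(\bmX'\bmX)^{-1}\bmX')(\bmy+\bmep)=\bme+\bmep$ for Parts \ref{lem:preparation:dev} and \ref{lem:preparation:red}. The only difference is that you spell out the $(1+b)$ norm computation in Part \ref{lem:preparation:ast} explicitly, which the paper leaves implicit.
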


\begin{proof}
Since $ \bmX'\bmX(\bmX'\bmX)^{-1}\bmX'=\bmX'$, we have
\begin{equation*}
\begin{split}
 \left(\bm{1}'_n\bm{e}  \ \bmx'_1\bm{e} \ \cdots \ \bmx'_p\bm{e} \right)'
&=\bmX'\bm{e} \\
&=\bm{X}'\left(\bm{I}_n-\bmX(\bmX'\bmX)^{-1}\bmX'\right)\bm{y} \\
&=\left(\bm{X}'-\bm{X}'\bmX(\bmX'\bmX)^{-1}\bmX'\right)\bm{y} \\
&=\bm{0},
\end{split}
\end{equation*}
which completes the proof of Part \ref{lem:preparation:e}.
In the same way, Part \ref{lem:preparation:u} can be proved. 

Recall $\bmep$ is given by a linear combination of $\bm{e}$ and $\bm{u}$,
\begin{equation}\label{eq:ep_2}
 \bm{\epsilon} =\frac{a\|\bm{e}\|}
{1+b}\left\{\frac{\bm{e}}{\|\bm{e}\|}+ \sqrt{b}\frac{\bm{u}}{\|\bm{u}\|}\right\}.
\end{equation}
Then Part \ref{lem:preparation:ep} follows 
from Parts \ref{lem:preparation:e} and \ref{lem:preparation:u}.
Part \ref{lem:preparation:ast} follows from the orthogonality of $\bm{e}$ and $\bm{u}$
together with \eqref{eq:ep_2}.  

Since the sample mean of $ \bmy+\bmep$ is $\bar{y}$ by Part \ref{thm:main:0} of Theorem \ref{thm:main},
the sum of squared deviation of $ \bmy+\bmep$, $ \|\bmy+\bmep-\bar{y}\bm{1}_n\|^2$, is
expanded as 
\begin{equation*}
\|\bmy-\bar{y}\bm{1}_n\|^2+2(\bmy-\bar{y}\bm{1}_n)'\bmep+\|\bmep\|^2.
\end{equation*}
By Part \ref{lem:preparation:ep}, we have
\begin{align*}
 (\bmy-\bar{y}\bm{1}_n)'\bmep=(\hat{\bmy}+\bm{e}-\bar{y}\bm{1}_n)'\bmep
=(\bmX\hat{\bmbe}+\bm{e}-\bar{y}\bm{1}_n)'\bmep=\bm{e}'\bmep=a\|\bme\|^2/(1+b).
\end{align*}
Then Part \ref{lem:preparation:dev} follows from Part \ref{lem:preparation:ast}.

Since $ \bmX'\bmep=\bm{0}$ by Part \ref{lem:preparation:ep},
we have
\begin{equation*}
 (\bm{I}_n-\bmX(\bmX'\bmX)^{-1}\bmX')(\bmy+\bmep)=\bme+\bmep.
\end{equation*}
From Part \ref{lem:preparation:ast}, the residual sum of squares is 
\begin{align*}
 \|\bme+\bmep\|^2=\|\bme\|^2+2\bme'\bmep+\|\bmep\|^2
=\|\bme\|^2+2a\|\bme\|^2/(1+b)+a^2\|\bme\|^2/(1+b),
\end{align*}
which completes the proof of Part \ref{lem:preparation:red}. 
\end{proof}
By Theorem \ref{thm:main}, we see that $a=-2$ is a special case, as follows.
\begin{thm}\label{thm:special}
Assume $a=-2$. Then, we have the followings.
\begin{enumerate}
\item \label{thm:special:1}
For any $b>0$, the coefficient of determination for $\bmy+\bmep$ is equal to $R^2$,
the coefficient of determination for the original $\bmy$.
\item \label{thm:special:2}
For any $b>0$, 
the $t$-value of $\beta_j $ $(j=0,1,\dots,p)$ for the response vector $\bmy+\bmep$ is equal to 
$t_j$.
\item \label{thm:special:3}
The correlation coefficient of $\bmy$ and $\bmy+\bmep$ is 
\begin{align}\label{correlation}
 r_{y,y+\epsilon}=1-\frac{2(1-R^2)}{1+b}.
\end{align}
\end{enumerate}
\end{thm}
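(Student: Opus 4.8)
The plan is to obtain Theorem~\ref{thm:special} as the specialization $a=-2$ of Theorem~\ref{thm:main}, which is already proved. The key algebraic observation is that the quantity $a(a+2)$ that appears in every formula of Theorem~\ref{thm:main} vanishes when $a=-2$, since $a(a+2)=(-2)(0)=0$. So Parts~\ref{thm:special:1} and \ref{thm:special:2} should each be a one-line substitution.

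\textbf{Part~\ref{thm:special:2}.} Substituting $a=-2$ into the formula of Part~\ref{thm:main:2} of Theorem~\ref{thm:main}, the factor becomes
\begin{equation*}
\left\{\frac{1+b}{1+b+a(a+2)}\right\}^{1/2}=\left\{\frac{1+b}{1+b+0}\right\}^{1/2}=1,
\end{equation*}
so $\tilde t_j=t_j$ for every $j=0,1,\dots,p$ and every $b>0$.

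\textbf{Part~\ref{thm:special:1}.} Likewise, substituting $a=-2$ into the formula of Part~\ref{thm:main:3} of Theorem~\ref{thm:main}, the bracketed factor becomes
\begin{equation*}
\left\{\frac{1+b}{1+b+a(a+2)(1-R^2)}\right\}=\left\{\frac{1+b}{1+b+0}\right\}=1,
\end{equation*}
so $\tilde R^2=R^2$ for every $b>0$.

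\textbf{Part~\ref{thm:special:3}.} For the correlation coefficient, I would start from the formula of Part~\ref{thm:main:4} of Theorem~\ref{thm:main}. With $a=-2$ the denominator simplifies because $a(a+2)=0$, leaving $(1+b)^{1/2}\{1+b\}^{1/2}=1+b$; the numerator becomes $1+b+(-2)(1-R^2)=1+b-2(1-R^2)$. Hence
\begin{equation*}
r_{y,y+\epsilon}=\frac{1+b-2(1-R^2)}{1+b}=1-\frac{2(1-R^2)}{1+b},
\end{equation*}
which is \eqref{correlation}. I do not anticipate a genuine obstacle here: the whole content of Theorem~\ref{thm:special} is the remark that $a=-2$ is the unique nonzero root of $a(a+2)$, so that the two key regression diagnostics are left invariant; the only thing to be careful about is the bookkeeping in simplifying the correlation formula, and noting that the restriction $b>0$ (rather than $b\geq 0$) is what guarantees $\bmep\neq\bzero$, so that $\bmy+\bmep$ is genuinely a perturbation of $\bmy$.
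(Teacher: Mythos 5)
Your proposal is correct and matches the paper's treatment: the paper presents Theorem \ref{thm:special} as an immediate consequence of Theorem \ref{thm:main} with no separate proof, precisely because $a(a+2)=0$ at $a=-2$, and your substitutions into Parts \ref{thm:main:2}--\ref{thm:main:4} reproduce all three claims, including the correlation formula \eqref{correlation}. One minor quibble: the restriction $b>0$ is not needed to make $\bmep\neq\bzero$ (that already follows from $a=-2\neq 0$); rather, $b>0$ ensures the noise involves the random direction $\bmu$, since $b=0$ gives the deterministic perturbation $\bmep=-2\bme$ discussed in Remark \ref{rem:a2}.
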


Recall that 
$\bmep$ is a function of $\bmv$, any random $n$-dimensional vector,
through the relationships, \eqref{def:u} and \eqref{y_ast}, that is,
\begin{align*}
\bm{u}= \left(\bm{I}_n-\bmX(\bmX'\bmX)^{-1}\bmX'-\frac{\bme\bme'}{\|\bme\|^2}\right)\bm{v},\
 \bm{\epsilon} =\frac{a\|\bm{e}\|} 
{1+b}\left\{\frac{\bm{e}}{\|\bm{e}\|}+ \sqrt{b}\frac{\bm{u}}{\|\bm{u}\|}\right\}.
\end{align*}
In Parts \ref{thm:special:1} and \ref{thm:special:2} of Theorem \ref{thm:special},
the choice $a=-2$ guarantees that the coefficient of determination and $t$-value remain the same 
regardless of $\bmv$.

By Part \ref{thm:special:3} of Theorem \ref{thm:special},
$ r_{y,y+\epsilon}$ increases with $b$ for fixed $R^2$.
The correlation coefficients between the original responses $\bmy$
and perturbed responses $\bmy+\bmep$ with $a=-2$, varying $b\geq 0$ and $R^2$,
are illustrated in Table \ref{tab:0}. 

In actual application, it is desirable to have
relatively high correlation, because data users might assume that the perturbed
response is close to the original response. However, if the correlation is very high, then
the perturbed response is very close to the original response, and the objective of
concealing the actual response cannot be achieved. Thus, it is necessary to determine a
value of $b$ that prevents the perturbed response from being too close to the actual
response, as will be discussed through the analysis of real data in the next section.

\begin{remark}\label{rem:a2}
 When $a=-2$ and $b=0$, we have $\bmep=-2\bme$ as the noise or, equivalently
\begin{align}
\bmy-2\bme=\hat{\bmy}-\bme 
\end{align}
as the perturbed response. 
In this case, it is clear that
the coefficient of determination and $t$-value remain the same, since
$y_i$ and $y_i-2e_i$ for $i=1,\dots,n$ are symmetric with respect to the point $\hat{y}_i=y_i-e_i$.
Since the noise $\bmep=-2\bme$ does not depend on $\bmv$, there is no randomness in the noise.
Theorem \ref{thm:special} ensures that, for random $\bmv$, as in \eqref{y_ast},
 it is possible to construct the noise $\bmep$ such that the coefficient of
 determination and $t$-value remain the same.
\end{remark}

\begin{table}
\caption{Correlation coefficient of $\bmy$ and $\bmy+\bmep$ with $a=-2$} 
\label{tab:0}
\begin{tabular}{cccccccccc}\toprule
$R^2\backslash b$ & $0$ & $0.25$ & $0.5$ & $0.75$ & $1.0$ & $1.25 $ & $1.5$ & $1.75$ & $2.0$ \\ \midrule
$0.4$ & -0.2 & 0.04 & 0.2 & 0.31& 0.4 & 0.47 & 0.52 & 0.56 & 0.6 \\
$0.6 $ & 0.2 & 0.36 & 0.47 & 0.54 & 0.6 & 0.64 & 0.68 & 0.71 & 0.73 \\
$0.8$ & 0.6 & 0.68 & 0.73 & 0.77 & 0.8 & 0.82 & 0.84 & 0.85 & 0.87 \\ \bottomrule
\end{tabular}
\end{table}

\begin{remark}\label{rem:not2}
As in Theorem \ref{thm:special}, 
the choice $a=-2$ with random $\bmv$ was surprisingly found to retain the $R^2$ and $t$ values. 
Following are some remarks for the other choices.
For $a\in(-\infty,-2)\cup(0,\infty)$, both $R^2$ and the absolute value of $t$ values are
reduced.
For example, $b>0$ and $a=-1\pm \sqrt{b+2}\in(-\infty,-2)\cup(0,\infty)$ yield
\begin{align}\label{eq:reduction}
 \tilde{t}_j=\frac{1}{\sqrt{2}}t_j, \tilde{R}^2=\frac{R^2}{2-R^2}<R^2.
\end{align}
Note that $R^2$ and $t$ values can be completely controlled.
Thus the data provider safely provide data with the relation between $ \{t_j,R^2\}$
and $\{\tilde{t}_j, \tilde{R}^2\}$ described by \eqref{eq:reduction},
and practitioners can restore the original $R^2$ and $t$-value independently.
An efficient method of opening data with reduced accuracy will be reported elsewhere.
\end{remark}

\section{Numerical experiment}
\label{sec:numerical}
In the previous section, a method was proposed to add noise to the response variable.
This method can be applied when a real estate database is released into the public
domain, by adding noise to the transacted price, which is considered to be sensitive
information in Japan. As Theorem \ref{thm:special} in the previous section ensures, the results of
regression analysis using the perturbed data will not change. However, in actual
application, a variety of analyses will be devised, and the theorems do not apply in
cases with unexpected applications. Thus, it is necessary to verify whether the
proposed method remains appropriate even in unexpected applications.

The precision of the results might be degraded if analytical operations not assumed in
the theory are applied. In such cases, permissible error levels resulting from the
perturbation must be determined. In the following, a numerical experiment to
determine the relationship between perturbation and precision level is discussed.

\subsection{Data used in the experiment}
The data source used for the numerical experiment was At Home Co.~Ltd. The data
contained real estate advertisement information from $2008$. The database for the
experiment was created by supplementing some spatial variables. It contained $1,320$
cases of newly built detached houses in Setagaya Ward in Tokyo Prefecture\footnote{We selected data that contained information about the designated floor area
and building coverage ratios. Such data are thought to be important in real estate
analysis in Japan. In the original database, a new record was added each time a
property owner changed the price in the advertisement. In such situations, we selected
only the newest record.}.
The variables included the price of the property (yen), the time to the nearest railway
station (minutes), a dummy variable representing bus usage, the area of the site
(square meters), the floor area (square meters), a dummy variable signifying leased
land, the designated building coverage ratio, the designated floor area ratio, the time to
Shinjuku by rail from the nearest station (minutes), the time to Shibuya by rail
(minutes), the time to Yokohama by rail (minutes), the time to Tokyo by rail (minutes),
the width of the nearest road (meters), and a dummy variable signifying the nearest
road to the south of lot. Note that Shinjuku, Shibuya, Yokohama, and Tokyo are four major
railway stations in the study region. Among these variables, the times to the railway
stations, width of the nearest road, and dummy variable signifying the nearest road to
the south are spatial variables, as described in the next subsection.

\subsection{Creation of spatial variables}
The times to the major railway stations from the nearest station; the width of the
nearest road, as measured from the representative point of the property; and the
dummy variable signifying whether the nearest road is located to the south of the
property were added to the original database as variables for signifying spatial
relationships. The width of the nearest road to the representative point of the property
was regarded as the width of the nearest road, which was done because precise digital
data for lots are not available. Accordingly, the dummy variable signifying whether the
nearest road was located to the south of the property was regarded as the dummy
variable signifying the nearest road to the south of lot.

The times to the major railway
stations from the nearest station were calculated using the search system for guiding
transferring railways provided by NAVITIME Japan Co.~Ltd. This system
automatically calculates the time required to travel to the major railway stations, i.e.,
Shinjuku, Shibuya, Yokohama, and Tokyo, from the railway station nearest to the
property. To determine the times required in this study, the departing time was set to
$12:00$ (noon) on August $2$, $2010$.

The width of the nearest road from the representative point of the property was
calculated as follows. Mapple $10000$ digital data produced by Shobunsha Publications
Inc. contain digital road data classified by road width categories, such as $4$-$5$m and
$5$-$6$m. The median of each class was assigned as the road width. For example, a width of
$4.5$m was used for the $4$-$5$m class. With the geographic information system (GIS)
software ArcGIS $10$, the nearest road was assigned for each property, and the width of
the road calculated as described above was set to be the width of road nearest to the
representative point of the property.

In the real estate market in Japan, a residential lot tends to be evaluated highly if it is
adjacent to a road to the south of lot, because receiving substantial sunlight is preferred in
Tokyo. For example, \cite{RealEstate1986} treats
properties adjacent to roads to the south of lots more favorably in their property appraisals.
With this preference in mind, the dummy variable signifying whether the nearest road
is located to the south of the property was also added to the database.

This dummy variable was constructed as follows. Using ArcGIS $10$, the direction to the
nearest road was calculated, such that $0^\circ$ was located to the east, and the value
increased to $180^\circ$ counterclockwise and decreased to $-180^\circ$ clockwise.
The range from $-135^\circ$ to $-45^\circ$ was judged to be to the south, in which case the dummy variable was set to one, and it was set to zero otherwise.
The statistics of the variables are summarized in Table \ref{tab:1}.

\begin{table}
\caption{Summary of variable statistics} 
 \label{tab:1}
\begin{threeparttable}
\begin{tabular}{ccccc}\toprule
& min & max & mean & s.d. \\ \midrule
price of the property (yen) & 34800000 & 330000000 & 72431491 & 25539447 \\
time to the nearest railway station (minutes) & 0 & 25 & 10.60 & 4.83 \\
d.v.\tnote{a} \ representing bus usage & 0 & 1 & 0.07 & 0.26 \\
area of the site (square meters) & 29.53 & 211.49 & 88.56 & 25.48 \\
floor area (square meters) & 47.07 & 228.48 & 98.94 & 20.06 \\
d.v.\tnote{a} \ signifying leased land & 0 & 1 & 0.03 & 0.17 \\
designated building coverage ratio & 40 & 80 & 54.18 & 7.70 \\
designated floor area ratio& 80 & 300 & 141.43 & 47.10 \\
time to Shinjuku by rail (minutes)& 5 & 32 & 18.72 & 5.29 \\
time to Shibuya by rail (minutes)& 3 & 29 & 14.86 & 6.01 \\
time to Yokohama by rail (minutes) & 17 & 64 & 44.30 & 10.99 \\
time to Tokyo by rail (minutes)& 23 & 48 & 34.09 & 4.90 \\
width of the nearest road (meters) & 4.5 & 35 & 5.80 & 2.25 \\
d.v.\tnote{a} \ signifying the nearest road to the south of lot& 0 & 1 & 0.28 & 0.45 \\ \bottomrule
\end{tabular}
\begin{tablenotes}\footnotesize
\item[a] d.v. stands for ``dummy variable''.
\end{tablenotes}
\end{threeparttable}
\end{table}

\subsection{Numerical experiment with perturbed property price}
The perturbed property price, which was generated by adding noise to the response
variable using the method described in the previous section, was numerically tested as described in this subsection. The explanatory variables used were the $13$ variables in Table \ref{tab:1}.

\subsubsection{Statistics of the perturbed property price}
Although Part \ref{thm:main:0} of Theorem \ref{thm:main} guarantees that the mean of the perturbed response
variable is exactly equal to the mean of the original response, the equality or similarity
of the other statistics, such as the minimum value, maximum value, and first and third
quantiles, theoretically cannot be controlled. In this section, the generation of four sets
of quasi-response variables with different $\bmv$ values, $a=-2$, and $b=1$ is described, to
analyze the degrees of perturbation of the statistics among the five sets, including the
original response (original, quasi1, quasi2, quasi3, and quasi4).

Figure \ref{fig:boxplot} shows boxplots of the five sets. When the original and quasi-response
variables are compared, the medians are very similar, but the quantiles, minima, and
maxima are quite different. It is also evident that, among the four sets of quasi
variables, all of the statistics are similar.
Figure \ref{fig:y_orig_quasi_cor} shows scatterplots of the original
variables and of the four sets of quasi variables. Although the plots for the four sets of
quasi variables appear very similar, the different $\bmv$ values imply different quasi
variables, as explained in Section \ref{sec:theory}.
Table \ref{tab:3} provides the correlation matrix for the five sets of response variables.
By Part \ref{thm:special:3} of Theorem \ref{thm:special}, the correlation coefficient between
the original and quasi variables is given theoretically by $1-2(1-R^2)/(1+b)$, which
equals $R^2$ for $b=1$. Among the quasi variables, the correlations in all cases are approximately $0.78$.

\begin{figure}
\centering
\includegraphics[width=14cm, height=11cm]{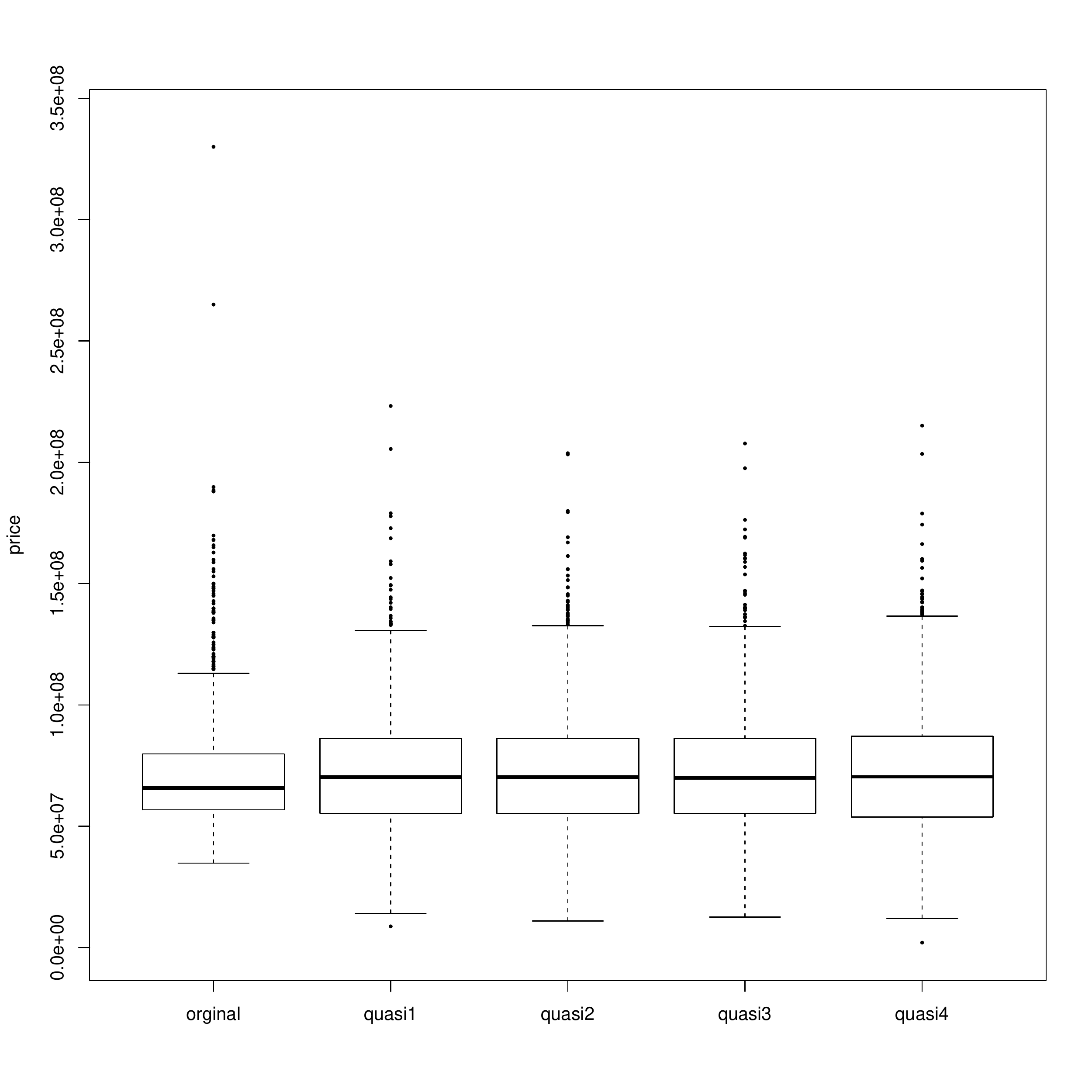}
\caption{Boxplots of five sets of response variables}
\label{fig:boxplot}
\end{figure}

\begin{figure}
\centering
\includegraphics[width=14cm, height=16cm]{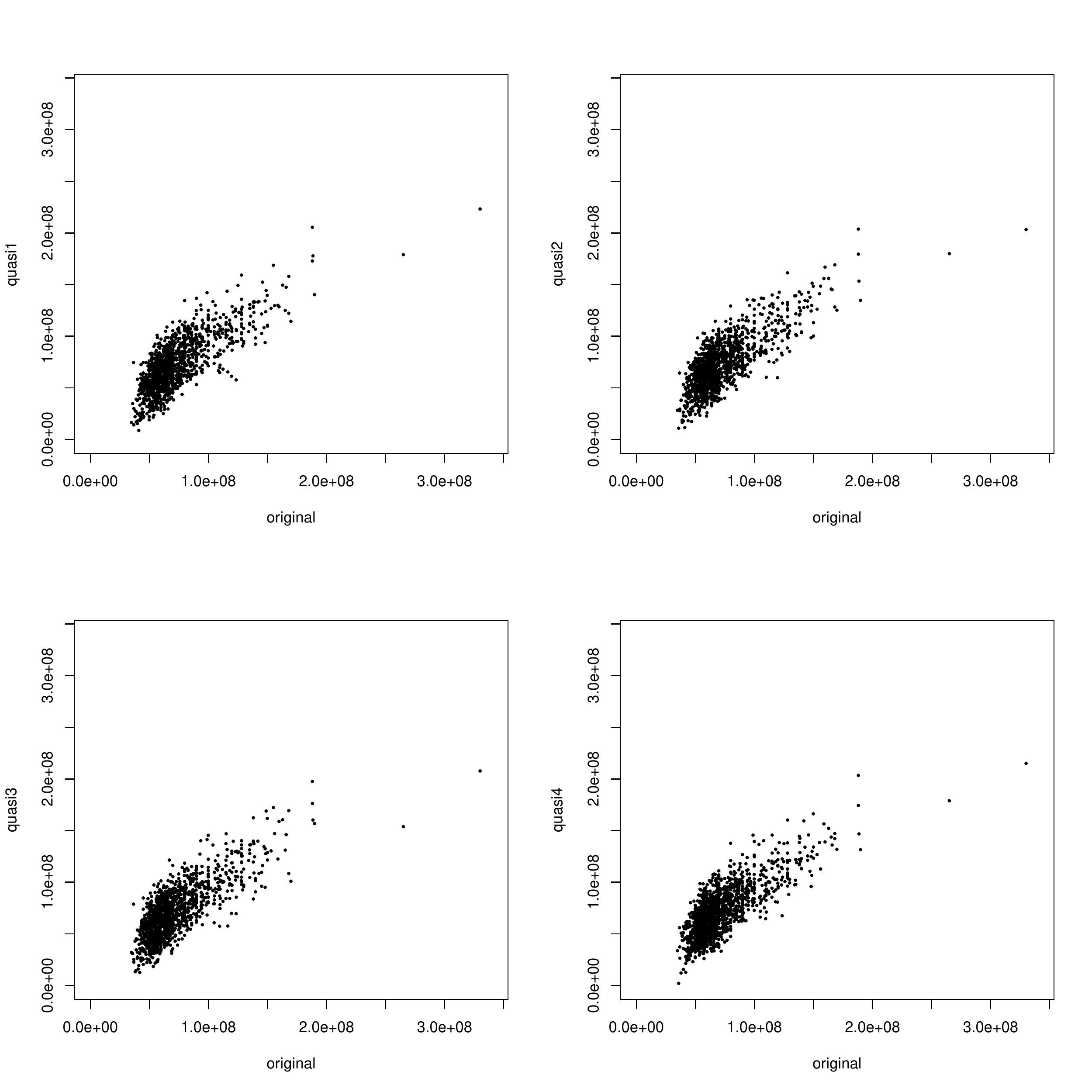}
\caption{Scatterplots of five sets of response variables}
\label{fig:y_orig_quasi_cor}
\end{figure}

\begin{table}
\caption{Correlations between original response and four sets of quasi responses} 
\label{tab:3}
\begin{tabular}{cccccc}\toprule
&	orig & quasi1 & quasi2 & quasi3 & quasi4 \\ \midrule
orig & 1 & 0.7748 & 0.7748 & 0.7748 & 0.7748 \\
quasi1 & 0.7748 & 1 & 0.7693 & 0.7749 & 0.7814 \\
quasi2 & 0.7748 & 0.7693 & 1 & 0.7870 & 0.7812 \\
quasi3 & 0.7748 & 0.7749 & 0.7870 & 1 & 0.7733 \\
quasi4 & 0.7748 & 0.7814 & 0.7812 & 0.7733 &1 \\ \bottomrule
\end{tabular}
\end{table} 

\begin{remark}\label{rem:minus}
In this particular data set, the response variable was the property price,
which was expected to be positive. Hence, a positive perturbed price is strongly
desirable. As claimed in Remark \ref{rem:a2}, for sufficiently small $b$, we have
\begin{align*}
 \bmy+\bmep\approx \hat{\bmy} - \bme.
\end{align*}
Suppose there exist individuals $i$ with relatively expensive prices $y_i$, when relatively lower
prices $y_i$ are expected. Then $e_i$ increases, and as a result
\begin{align}\label{eq:minus}
 y_i+\epsilon_i \approx \hat{y}_i  - e_i<0
\end{align}
can occur. In our data set, such situations rarely occurred for $b=1.2$ or less and never
occurred for $b=1.3$ or greater. To the best of our knowledge, the occurrence of such
situations is theoretically not controllable through the choices of $b$ and $\bmv$.
When \eqref{eq:minus} occurs, it is recommended to generate $\bmep$ with different $\bmv$
values until $\min \{y_i+\epsilon_i\}>0$ is achieved.
\end{remark}

\subsubsection{Regression analysis using only a portion of the database}
The theory assumes that all of the data will be used for the analysis. If only a portion of
the perturbed data is used, then the theorems do not apply exactly. In actual analyses
for real estate data, only a portion of the (perturbed) database is used for the analysis.
In such cases, it is necessary to know how the results might differ from the theoretical
results and to follow the subsequently described guidelines to choose an appropriate
value of $b$.

For this purpose, a critical value of $b$ may be obtained such that the difference between
the regression model using the perturbed property price as the response variable and
the original property price is not statistically significant.

\subsubsection{Chow test}
From $1,320$ cases (total database), $20\%$ (i.e., $264$ cases) were selected randomly, and
perturbed prices were generated for $13$ $b$ values (i.e., $b = 0.5$, $0.6$, $0.7$, $0.8$, $0.9$, $1.0$, $1.1$, $1.2$, $1.3$, $1.4$, $1.5$, $2.0$, $2.5$).
The Chow test was applied to determine whether the regression models with the original and the perturbed prices could be regarded as the same model.
For each value of $b$, $1,000$ independently chosen samples were created and analyzed.
As a result, for each value of $b$, $1,000$ values of the Chow test $F$ value were derived.
Ordering these values by magnitude, $5\%$, $10\%$, $50\%$, $90\%$ and $95\%$
(i.e., the $50^{\text{th}}$, $100^{\text{th}}$, $500^{\text{th}}$, $900^{\text{th}}$ and $950^{\text{th}}$ value) of the points of the $F$ value were derived.
Figure \ref{fig:F} shows that larger values of $b$ correspond to smaller $F$-value variations.
Given the objective of choosing an appropriate value of $b$ to generate a properly perturbed
property price, the minimum value of $b$ for which the null hypothesis of the Chow test
(namely, $H_0$:``There is no statistically significant difference between two models'') is not
rejected can be considered the critical value of $b$. Note that the $F$ value in the $F$
distributions with degrees of freedom $14$ and $500$ that achieves a significance level $0.05$
is $F=1.71$. Hence, if $F$ is less than $1.71$, then the null hypothesis cannot be rejected,
and therefore the two models can be regarded as statistically the same.

For each value of $b$, the percentage of $F$ values among $1,000$ trials that satisfied the
acceptance condition of $F$ less than $1.71$ was calculated.
In our numerical experiment,
these percentages are $65.0\%$, $97.0\%$, and $100\%$ for $b=0.5$, $b=1.0$, and $b\geq 1.4$,
respectively, as seen in Table \ref{tab:2}.

\begin{figure}
\centering
\includegraphics[width=12cm, height=8cm]{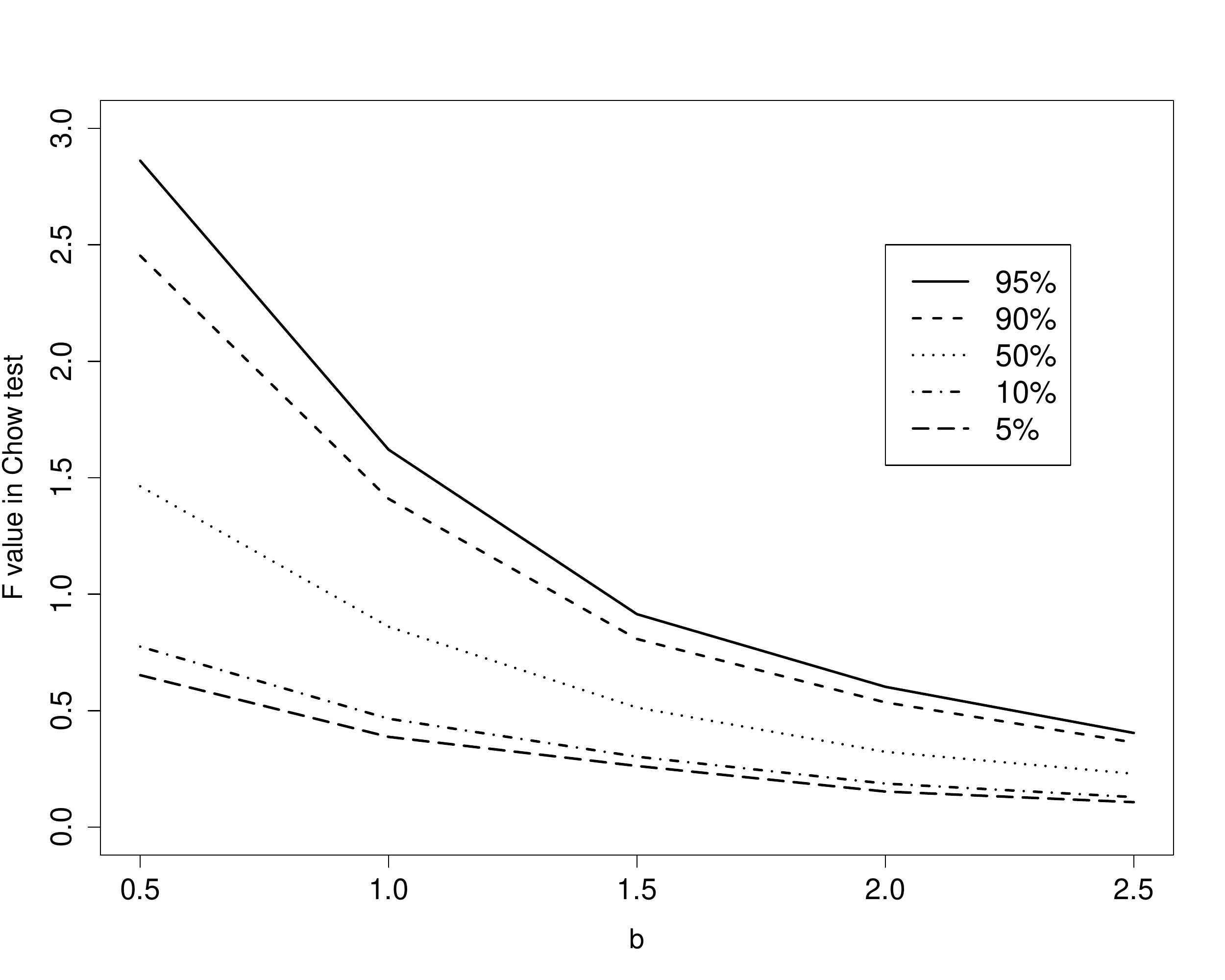}
\caption{The relation between F value and $b$}
\label{fig:F}
\end{figure}

\subsubsection{Recommended standard for $b$ value}
In the numerical experiment described above, when $20\%$ was selected randomly, the
Chow tests used to test the identity of the two models, that is, the regression models
with the actual and perturbed property prices as the response variables, demonstrated that the F value satisfied the acceptance condition with $97.0\%$ probability when $b = 1.0$
and $100\%$ probability when $b\geq 1.4$.
Assuming that approximately $5\%$ is the permissible level for hypothesis rejection
(i.e., that two models cannot be regarded as the same),
$b=1.0$ is judged appropriate, as it ensures that the perturbed price is perturbed
sufficiently and, nonetheless, that the regression model with the perturbed price can be
regarded as identical to the regression model with the original price.
The appropriate value of $b$ differs if another percentage is used to select the sample.
For instance, we let $q$ be the percentage used to select the sample and, using the above
numerical experiment, we let $q = 0.2$ ($20\%$).
Assuming a $5\%$ rejection level, the critical value of $b$, $b_*$,
such that for $b$ less than $b_*$ the probability of rejection becomes greater
than $5\%$, was calculated by changing $q$.
Table \ref{tab:2} summarizes the results. For all $q$ values investigated in this study,
$b_* = 1.0$ appears to be a reasonable choice, as it balances
the similarity and the perturbation to the original price.


\begin{table}
\caption{Percentages of samples that accepted the null hypothesis, for which the perturbed sample can be regarded as statistically identical to the original sample for sample selection percentage, $q$, and $b$ value} 
\label{tab:2}
\begin{tabular}{ccccccccccc}\toprule
$b\backslash q$ & 0.05 & 0.10 & 0.20 & 0.30 & 0.40 & 0.50 & 0.60 & 0.70 & 0.80 & 0.90 \\ \midrule
0.5 & 0.591 & 0.576 & 0.650 & 0.728 & 0.827 & 0.918 & 0.969 & 0.992 & 0.996 & 1.000 \\
0.6 & 0.686 & 0.670 & 0.744 & 0.808 & 0.903 & 0.947 & 0.988 & 0.997 & 0.999 & 1.000 \\
0.7 & 0.729 & 0.764 & 0.808 & 0.873 & 0.943 & 0.976 & 0.994 & 1.000 & 0.999 & 1.000 \\
0.8 & 0.815 & 0.845 & 0.858 & 0.928 & 0.966 & 0.988 & 0.996 & 1.000 & 1.000 & 1.000 \\
0.9 & 0.867 & 0.867 & 0.931 & 0.966 & 0.989 & 1.000 & 0.998 & 1.000 & 1.000 & 1.000 \\
1.0 & 0.930 & 0.925 & 0.970 & 0.987 & 0.995 & 0.998 & 1.000 & 1.000 & 1.000 & 1.000 \\
1.1 & 0.960 & 0.968 & 0.983 & 0.998 & 0.997 & 1.000 & 1.000 & 1.000 & 1.000 & 1.000 \\
1.2 & 0.978 & 0.987 & 0.994 & 0.999 & 0.999 & 1.000 & 1.000 & 1.000 &1.000 & 1.000 \\
1.3 & 0.994 & 0.994 & 0.995 & 1.000 & 1.000 & 1.000 & 1.000 & 1.000 & 1.000 & 1.000 \\
1.4 & 0.996 & 1.000 & 1.000 & 1.000 & 1.000 & 1.000 & 1.000 & 1.000 & 1.000 & 1.000 \\
1.5 & 0.999 & 0.998 & 1.000 & 1.000 & 1.000 & 1.000 & 1.000 & 1.000 & 1.000 & 1.000 \\ \bottomrule
\end{tabular}
\end{table}

\section{Conclusion}
\label{sec:conclusion}
This paper proposed a new method of perturbing a major variable by adding noise,
while ensuring that the results of regression analysis are not affected. The extent of the
perturbation can be controlled using a single parameter, $b$, which eases actual
perturbation application. Moreover, $b=1.0$ can be regarded as an appropriate value for
achieving both sufficient perturbation to mask the original values and sufficient
coherence between the perturbed and original data.

The proposed method masks only one major variable, but in actual application, many
situations may be encountered in which only one variable is critical to put the entire
dataset in the public domain. Our method will be useful in such situations. There are
other possible uses of perturbed data, and the appropriateness of the $b$ value must be
examined by testing a greater variety of data-use cases. Admittedly, application of the
proposed method is limited, because other variables are assumed to retain their
original values. Thus, further methods of perturbing the explanatory variables are
necessary to broaden the range of applications. Such extensions will be provided in
subsequent work.

\section*{Acknowledgements}
This work was partly supported by a Research Grant from the Secom Science and
Technology Foundation. The work of the first and third authors was partly supported
by Grant-in-Aids for Scientific Research No.~25330035 and No.~26590036, respectively,
from the Japan Society for the Promotion of Sciences.

\end{document}